\providecommand{\tabularnewline}{\\}
 \theoremstyle{definition}
 \newtheorem{defn}{Definition}
  \theoremstyle{plain}
  \theoremstyle{plain}
  \newtheorem*{lem*}{Lemma}
  \theoremstyle{plain}
  \newtheorem*{thm*}{Theorem}
  \theoremstyle{plain}
  \newtheorem*{algorithm*}{Algorithm}
  \newtheorem{algorithm}{Algorithm}
  \theoremstyle{definition}
  \newtheorem*{example*}{Example}
  \theoremstyle{remark}
  \newtheorem*{conclusion*}{Conclusion}
 \theoremstyle{remark}
  \newtheorem*{acknowledgement*}{Acknowledgement}
\begin{document}

\title{A new technique for text data compression}

\author{Udita Katugampola \\ southern illinois university, Carbondale, IL 62901 }

\begin{abstract}
In this paper we use ternary representation of numbers for compressing text 
data. We use a binary map for ternary digits and introduce a way to use the
 binary 11-pair, which has never been use for coding data before, and we 
 futher use  4-Digits ternary representation of alphabet with lowercase and 
 uppercase with some extra symbols that are most commonly used in day to day 
 life. We find a way to minimize the length of the bits string, which is only 
 possible in ternary representation thus drastically reducing the length of 
 the code. We also find some connection between this technique of coding data 
 and Fibonacci numbers.
\end{abstract}
\maketitle

\keywords{Keywords: Coding, Data Compression, Binary system, Ternary System, Golden Ratio}

\section*{Introduction}

Ternary or trinary is the base-3 numeral system. 
A ternary digit, \emph{trit} contains about 1.58596 ($log_{2}3$) bit of information.
Even though ternary most often refers to a system in which the three
digits, 0, 1, and 2, are all nonnegative integers, the adjective also
lends its name to the balanced ternary system, which uses -1, 0 and
+1 instead, used in comparison logic and ternary converters \cite{key-3}\cite{key-11}\cite{key-13}\cite{key-14}. 

Techniques have been developed for text document compression that are 
semiadaptive, which uses frequncy-ordered array of word-number mappings \cite{key-7}\cite{key-18}\cite{key-19}. Encoding binary digital data in ternary form has applicability to digital data 
communication systems and magnetic data storage systems \cite{key-8} and also 
in high-speed binary multipliers, which uses ternary representations of two 
numbers \cite{key-9}. Methods have been developed for converting binary signals 
into shorter balanced ternary code signals \cite{key-11}.

Variable length coding is a widely-used method in data compression, especially, 
in the applications of video data communication and storage, for example, JPEG, 
MPEG, CCITT H.261 and so on. Most of those methods implement the coding with two-field 
codes. Hsieh \cite{key-10} introduced a method using three-field representation for each code. 
Also, we can adopt ternary systems in difference coding in audio compression\cite{key-22}\cite{key-24}. 
The node of a Peano Curve can be represented by a base-3 reflected Gray Codes, RGC\cite{key-23}.

Ternary systems have been used in digital data recording systems \cite{key-20} and precoded ternary data transmission systems as well \cite{key-21} from which they can send and store more data compared to what binary system does. 

Digital data compression is an important tool because it can be utilized, for example, to reduce the storage requirements for files, to increase the rate at which data can be transferred over bandwidth limited communication channels, and to reduce the internal redundancy of data prior to its encryption in order to provide increased security.

In this paper we use standard ternary representation for coding data. We introduce a new representation called $Base$$\: B_{23}$, which uses both ternary and binary representation with the maximum use of both binary and 
ternary features in a single coded data. Before going further, let us compare the binary and ternary representaions of few numbers,

\begin{center}
$85_{10}=1010101_{2}=$$10010_{3}$ \mbox{ and} $150_{10}=10010110_{2}=12120_{3}$
\end{center}

\noindent 
Thus, the binary represenatation of 85 uses 7 bits while ternary representation uses only 
5 trits, and in 150, it uses 5 trits again, which is more than 40\% saving in memory. Here 
we also notice that the 12-pair occurs twice in this representation. Thus if we can use a 
better representation which can code the 12-pair we can save more memory when saving these 
data and can save more time when sending to another detector. Thus our main goal is to develpoe 
a system which uses ternary representation in a sophisticated mannar. That is what we are going to 
do in the rest of this paper.

\section*{The Basic Definitions}

We begin with the following two definitions.

\begin{defn}
Let $n=\epsilon_{1}\epsilon_{2}\epsilon_{3}...\epsilon_{k},$ where
$\epsilon_{i}=$$0,$1, 2; $k\in N$ be the ternary representation
of the decimal number $n.$ Then we say $n$ is in the form $A_{23}$,
if there is a map $\varphi:\{0,1,2\}\longmapsto\{00,01,10\}$ such
that,
\end{defn}
\begin{center}
\[ \phi(\epsilon_i)= \begin{cases} 
                              00, &\text{if $\epsilon_i=0$;}\\ 
                              01, &\text{if $\epsilon_i=1$;}\\
                              10, &\text{otherwise.} 
                     \end{cases} 
\]
\par\end{center}

\begin{flushleft}
for each $i=1,2,3,...,k.$
\par\end{flushleft}

$ $

\begin{flushleft}
With this definition, we can write $85_{10}=10010_{3}=0100000100_{A_{23}}$
and $150_{10}=12120_{3}=0110011000_{A_{23}}$. Thus it double the
lenth of the string which represents the number in $A_{23}$ base.
But in this format we waste the 11 pair. So we need to modify the
coding so that we can make use of the binary string $11_{2}$. Thus
we come up with the following definition, with the so called $Base\:$$B_{23}.$
\par\end{flushleft}

$ $

\begin{defn}
Let $n=\epsilon_{1}\epsilon_{2}\epsilon_{3}...\epsilon_{k},$ where
$\epsilon_{i}=$$0,$1, 2; $k\in N$ be the ternary representation
of the decimal number $n.$ Then we say $n=\varepsilon_{1}\varepsilon_{2}\varepsilon_{3}...\varepsilon_{l},$
where $\varepsilon_{i}=00,01,10,11;$ $l\leq k,$ is in Base $B_{23}$,
if there is a map $\psi:\{0,1,2\}\longmapsto\{00,01,10,11\}$ such
that $\psi\left(\epsilon_{i}\right)=\varepsilon_{j,}$where
\end{defn}
\begin{center}
\[ \varepsilon_j= \begin{cases} 
                              00, &\text{if $\epsilon_i=0$;}\\ 
                              01, &\text{if $\epsilon_i=1\,and\, \epsilon_{i+1}\neq 2$;}\\
                              10, &\text{if $\epsilon_i=2\,and\, \epsilon_{i-1}\neq 1$;}\\
                                11, &\text{if $\epsilon_i\epsilon_{i+1}=12$.} 
                     \end{cases} 
\]
\par\end{center}

\begin{flushleft}
for each $i=1,2,3,...,k.$
\par\end{flushleft}

$ $

\begin{flushleft}
That is, we say a number $n$ is in base $B_{23}$, if each of the
trits of its ternary representation is replaced by the following binary
bits in that order: 
\par\end{flushleft}

\begin{center}
0 $\mapsto$00,
\par\end{center}

\begin{center}
12 $\mapsto$11,
\par\end{center}

\begin{center}
1 $\mapsto$01,
\par\end{center}

\begin{center}
2 $\mapsto$10.
\par\end{center}

\begin{flushleft}
So we can write $85_{10}=0100000100_{A_{23}}=0100000100_{B_{23}}$
and $150_{10}=0110011000_{A_{23}}=111100_{B_{23}}$. Thus it drastically
reduces the length of the bits string if the $12_{3}$ is present
in the coding. The more $12_{3}$ pairs are present, the more compact
the code can be. Therefore, it is natural to seek the availability
of $12_{3}$ pairs in a string of ternary representation of a data.
Thus we have the following lemma,
\par\end{flushleft}

\begin{lem*}
Golden Lemma 
\end{lem*}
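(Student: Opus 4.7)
The plan is to count ternary strings according to their $12_3$-substring content using a two-state automaton, extract the resulting linear recurrence, and read off the golden-ratio / Fibonacci connection from its characteristic polynomial. Since the substitution rule defining base $B_{23}$ only needs one-trit lookahead (``is the next trit a $2$?''), such an automaton is the natural tool and should keep the argument compact.

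First, I would introduce two states on the alphabet $\{0,1,2\}$: state $S_{0}$ records ``the last trit was not $1$,'' and state $S_{1}$ records ``the last trit was $1$ and is pending as the possible first half of a $12$-pair.'' Let $a_{n}$ and $b_{n}$ denote the number of length-$n$ ternary strings (avoiding $12$) ending in $S_{0}$ and $S_{1}$ respectively. From $S_{1}$ the only legal appendices are $0$ (return to $S_{0}$) and $1$ (stay in $S_{1}$); from $S_{0}$ all three digits are legal. This gives
\[
a_{n+1} = 2a_{n} + b_{n}, \qquad b_{n+1} = a_{n} + b_{n},
\]
which collapses, via $T_{n} := a_{n}+b_{n}$ together with the identity $b_{n} = T_{n-1}$, into a single recurrence $T_{n+1} = 3T_{n} - T_{n-1}$.

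Next, I would solve the recurrence. The characteristic polynomial $x^{2}-3x+1$ has roots $\phi^{\pm 2} = (3 \pm \sqrt{5})/2$, where $\phi = (1+\sqrt{5})/2$ is the golden ratio. A Binet-type manipulation with the initial values $T_{0}=1,\ T_{1}=3$ yields the exact identity $T_{n} = F_{2n+2}$, a Fibonacci number, directly justifying the lemma's name. The asymptotic growth rate of $T_{n}$ is $\phi^{2} \approx 2.618$, strictly less than $3$, so a positive fraction of ternary representations carries at least one $12$-pair and the $B_{23}$ compression indeed saves memory on average.

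The main obstacle I foresee is not the algebra but pinning down the exact quantity being asserted: the $12$-free strings, the strings containing at least one $12$-pair, or perhaps the maximum number of non-overlapping $12$-pairs achievable in a length-$n$ string. Each variant slots into the same two-state skeleton with only cosmetic changes in the transition weights (or the addition of a generating-function marker for the $12$-count), so once the precise statement is fixed the argument above adapts directly. A subtle point to respect throughout is that the substitution $\psi$ in base $B_{23}$ \emph{consumes} the $2$ inside every matched $12$, so all counts must be taken in the non-overlapping sense (greedy left-to-right matching), which is exactly what the two-state automaton enforces; any naive ``count all occurrences as a substring'' estimate would double-count configurations like $122$ or $112$ and produce a spurious recurrence.
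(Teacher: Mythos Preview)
Your argument is correct and lands on exactly the recurrence $T_{n+1}=3T_n-T_{n-1}$ (equivalently $S_n=3S_{n-1}-S_{n-2}$) with the right initial data, and your identification $T_n=F_{2n+2}$ reproduces the closed form $S_n=\frac{1}{\sqrt{5}}(\phi^{2n+2}-\phi^{-2n-2})$ stated in the lemma. Your hedging in the final paragraph is unnecessary: the lemma is precisely about the count of $12$-avoiding ternary strings, which is what your automaton enumerates.

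The route, however, differs from the paper's. The paper does not use a state machine; instead it classifies a valid length-$n$ string by its trailing block of $2$'s. If the last trit is $0$ or $1$ one gets $2S_{n-1}$; if the string ends in exactly $k$ copies of $2$ preceded by a $0$, the prefix contributes $S_{n-k-1}$; and the all-$2$ string is handled separately. This produces the full-history recurrence $S_n=2S_{n-1}+S_{n-2}+\cdots+S_1+2$, which is then telescoped against the same identity for $S_{n-1}$ to obtain $S_n=3S_{n-1}-S_{n-2}$. Your transfer-matrix derivation is shorter and more systematic (and generalises immediately to any finite set of forbidden factors), while the paper's suffix decomposition is more hands-on and makes the role of the trailing $2$-run explicit. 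Either way the characteristic polynomial $x^2-3x+1=(x-\phi^2)(x-\phi^{-2})$ does the rest.
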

 The number $S_n$ of ways that a string of trits $0,$ $1$
and $2$ of length $n$ with no $12$ pairs in the string is

\begin{center}
$S_{n}=$$\frac{1}{\sqrt{5}}$$\left\{ \phi^{2n+2}-\phi^{-2n-2}\right\} $
for $n\in\mathbf{N},$ 
\par\end{center}

\begin{flushleft}
where $\phi$ is the $Golden\: Ratio$ $\frac{1+\sqrt{5}}{2}$.
\par\end{flushleft}

\begin{proof}
We prove this result using a recurrence relation. First consider the
following construction of string of $0,$1 and $2$ of size $n$. 

\noindent \begin{center}
$S_{n}=\mbox{} $\begin{tabular}{|c|c|c|c|c|c|c|c|c|c|}
\hline 
 &  &  &  & . . . &  &  &  &  & ${0\atop 1}$\tabularnewline
\hline
\end{tabular} $\longrightarrow2S_{n-1}$\vspace{.2cm}

\begin{tabular}{|c|c|c|c|c|c|c|c|c|}
\hline 
 &  &  &  & ... &  &  & $2$ & $2$\tabularnewline
\hline
\end{tabular} + \begin{tabular}{|c|c|c|c|c|c|c|c|c|}
\hline 
 &  &  &  & ... &  &  & $0$ & 2\tabularnewline
\hline
\end{tabular} $\longrightarrow$$S_{n-2}$\vspace{.2cm}

\begin{tabular}{|c|c|c|c|c|c|c|c|c|}
\hline 
 &  &  &  & ... &  & 2 & $2$ & $2$\tabularnewline
\hline
\end{tabular} + \begin{tabular}{|c|c|c|c|c|c|c|c|c|}
\hline 
 &  &  &  & ... &  & 0 & $2$ & 2\tabularnewline
\hline
\end{tabular} $\longrightarrow$$S_{n-3}$\vspace{.2cm}

\hspace{-1.4cm}\vdots\vspace{.2cm}  

\hspace{-.5cm}
\begin{tabular}{|c|c|c|c|c|c|c|c|}
\hline 
 &  & 2 & 2 & ... & 2 & 2 & 2\tabularnewline
\hline
\end{tabular} + \begin{tabular}{|c|c|c|c|c|c|c|c|}
\hline 
 &  & 0 & 2 & ... & 2 & 2 & 2\tabularnewline
\hline
\end{tabular} $\longrightarrow$$S_{2}$\vspace{.2cm}

\begin{tabular}{|c|c|c|c|c|c|c|c|}
\hline 
${0\atop 2}$ & 2 & 2 & 2 & ... & 2 & 2 & 2\tabularnewline
\hline
\end{tabular} + \begin{tabular}{|c|c|c|c|c|c|c|c|}
\hline 
 & 0 & 2 & 2 & ... & 2 & 2 & 2\tabularnewline
\hline
\end{tabular} $\longrightarrow$$S_{1}+2$$\:$

\par\end{center}
\[
\]
\begin{center}
\mbox{Figure 1: Deriving Formula}
\end{center}

\begin{flushleft}
$\vphantom{}$According to the diagram we have,
\begin{equation*}
   S_n = 2S_{n-1}+S_{n-2}+S_{n-3}+...+S_2+S_1+2,
\end{equation*}
\par\end{flushleft}

\noindent with $S_{1}=$ 3 and $S_{2}=8$ for $n$$\geq3.$ This reduces to $S_{n}=3S_{n-1}-S_{n-2}$ with $S_{1}=$ 3 and $S_{2}=8$ for
$n$$\geq3.$ Solving the recurrence relation with the
given data, we have,

\noindent \begin{center}
$S_{n}=$$\frac{1}{\sqrt{5}}$$\left\{ \left(\frac{1+\sqrt{5}}{2}\right)^{2n+2}-\left(\frac{1-\sqrt{5}}{2}\right)^{2n+2}\right\} $
for $n$$\geq3$
\par\end{center}

\noindent \begin{center}
$=$$\frac{1}{\sqrt{5}}$$\left\{ \phi^{2n+2}-\phi^{-2n-2}\right\} ,$
where $\phi=$$\frac{1+\sqrt{5}}{2}$.
\par\end{center}
\end{proof}

\begin{flushleft}
The first few terms of this sequence are $S_{1}=$3, $S_{2}=8,$ $S_{3}=$21, $S_{4}=55,$ $S_{5}=144$.
These are the even terms of the Fibonachchi sequence defined by, $F_{n}=F_{n-1}+F_{n-2}$ with $F_{1}=$2, $F_{2}=3.$
\par\end{flushleft}

$ $

\begin{flushleft}
Now, consider a uniformely distributed ternary string of $0,$ $1,$
and $2$ of size $n.$ Then the following holds:
\par\end{flushleft}

\begin{thm*}
The probability of appearing at least one $12-$ pair of a string
of $0,$ 1, and $2$ of length $n$ is asymptotically 1.
\end{thm*}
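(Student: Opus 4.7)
My plan is to leverage the Golden Lemma directly: the total number of ternary strings of length $n$ is $3^n$, and under the uniform distribution each is equally likely, so the probability that a random string contains \emph{no} $12$-pair is exactly $S_n/3^n$. It then suffices to show this quotient tends to $0$, since the event of having at least one $12$-pair is its complement.

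First I would extract the dominant term of $S_n$. From the Golden Lemma, $S_n = \frac{1}{\sqrt{5}}\bigl(\phi^{2n+2}-\phi^{-2n-2}\bigr)$, and since $|\phi^{-1}|<1$ the second term is negligible, giving $S_n \sim \frac{\phi^{2n+2}}{\sqrt{5}}$ as $n\to\infty$. Thus
\begin{equation*}
\frac{S_n}{3^n} \;\sim\; \frac{\phi^{2}}{\sqrt{5}}\cdot\left(\frac{\phi^{2}}{3}\right)^{n}.
\end{equation*}

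Next I would observe the key numerical fact $\phi^2 = \phi+1 = \frac{3+\sqrt{5}}{2}$. Since $\sqrt{5}<3$, we get $\phi^2 < 3$, so the ratio $\phi^2/3 < 1$ and the geometric factor $\left(\phi^2/3\right)^n$ decays to zero exponentially. Therefore $S_n/3^n \to 0$, and the probability of the complementary event, namely that at least one $12$-pair occurs, tends to $1$.

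Essentially the only subtle point is verifying that the base of exponential growth for the forbidden strings, $\phi^2 \approx 2.618$, is strictly less than $3$, the base for all ternary strings of length $n$; once that inequality is in place, the asymptotic conclusion is immediate. I would also briefly note that the convergence is in fact exponentially fast, with rate $(\phi^{2}/3)^{n}$, so the statement holds not merely asymptotically but with a sharp decay estimate on the probability of avoiding every $12$-pair.
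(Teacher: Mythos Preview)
Your proposal is correct and follows essentially the same approach as the paper: both compute the complementary probability $S_n/3^n$ via the Golden Lemma, isolate the dominant term $\frac{\phi^2}{\sqrt 5}\bigl(\phi^2/3\bigr)^n$, and conclude by the key inequality $\phi^2=\frac{3+\sqrt 5}{2}<3$. Your write-up is slightly more detailed in justifying why the $\phi^{-2n-2}$ term is negligible and in noting the exponential rate of convergence, but the argument is the same.
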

\begin{proof}
If $S_{n}$ is the number of ways that a string of trits $0,$ $1$
and $2$ of size $n$ can be arranged with no $12-$pairs, then the
probability of getting at least one pair of $12$ is

\begin{align}
    P_n &= \frac{3^n-S_n}{3^n},\notag \\ 
    &= 1- \frac{1}{3^{n}\sqrt{5}} \left(\phi^{2n+2}-\phi^{-2n-2}\right)\notag     
\end{align}Therefore, ${lim\atop n\rightarrow\infty}$$P_{n}=$${lim\atop n\rightarrow\infty}\left(\frac{3^{n}-\frac{1}{\sqrt{5}}\phi^{2n+2}}{3^{n}}\right)$
= ${lim\atop n\rightarrow\infty}1-\frac{\phi^{2}}{\sqrt{5}}\left(\frac{\phi^{2}}{3}\right)^{n}$$\rightarrow1,$
since $\phi^{2}=\frac{3+\sqrt{5}}{2}$ < 3.
\end{proof}
This shows that when $n$ gets larger, the mapping $12$$\rightarrow11,$ should efficiently reduce the length of the
$B_{23}$ string. 


\section*{Data compression with $B_{23}$}

Here we discuss one of the main usage of $B_{23}$ in coding data
in day to day life. When compared to other benifits of converting
data into binary form, word processing takes the leadership. Thus
in this paper we discuss a technique that could drastically increase
the storing and sending capabilities of data using the so called $B_{23}$ternary
coding. 

\begin{algorithm*}
Let A=$\{a_{1},a_{2},a_{3},...,a_{n}\}$ be an alphabet with, $p\left(a_{i}\right)$
being the probability of $a_{i}$ appearing in the language generated
by A. Without loss of generality, suppose $p\left(a_{1}\right)\geq p\left(a_{2}\right)\geq...\geq p\left(a_{n}\right).$
Let B=$\{b_{1},b_{2},b_{3},...,b_{n}\}$ be a sequence of integers
written in ternary form. Define, $\delta:B\mapsto\mathbb{Z}$ by \begin{equation*}
    \delta(b_i)=j, \,\text{where}\,\,\text{$b_i=...12_1...12_2...12_j...$}
\end{equation*}

\begin{flushleft}
Suppose without loss of generality, $\delta(b_{1})\geq\delta(b_{2})\geq...\geq\delta(b_{n}).$
Then we define a map $\Omega:A\longmapsto B$ such that $\Omega(a_{i})=b_{i}$
for i=1,2,...,n. 
\par\end{flushleft}
\end{algorithm*}
\begin{flushleft}
We notice that this scheme genarates a coding for the language generated
by $A.$ Now, we turn to a more practical example. That is the english
alphabet and the language generated by it. 
\par\end{flushleft}

\begin{example*}
Here we are going to find a $B_{23}$ code for 26-letter english alphabet, which is different from Huffman coding of 26-letter alphabet and also is different from the ternary Huffman coding\cite{key-22}\cite{key-2}\cite{key-4}\cite{key-25}. We also include some extra characters, which are frequently used in word processing. Before going further, we need to observe some facts related to the usage of english alphabet. When we use a language, we have to use some letters more frequently than others. Particularly in English the letter $'e'$ has much highest frequency compared to the others characters
in the alphabet, while the letter $'z'$ has the lowest frequency.
Some of these facts is summaries in the follwing table accompanied
with the two charts 1 and 2. %
\footnote{Cryptographical Mathematics, Robert Edward Lewand. MAA, Washington
DC, 2000%
}
\end{example*}
\includegraphics[bb=60bp 0bp 560bp 450bp,width=7cm,height=7cm]{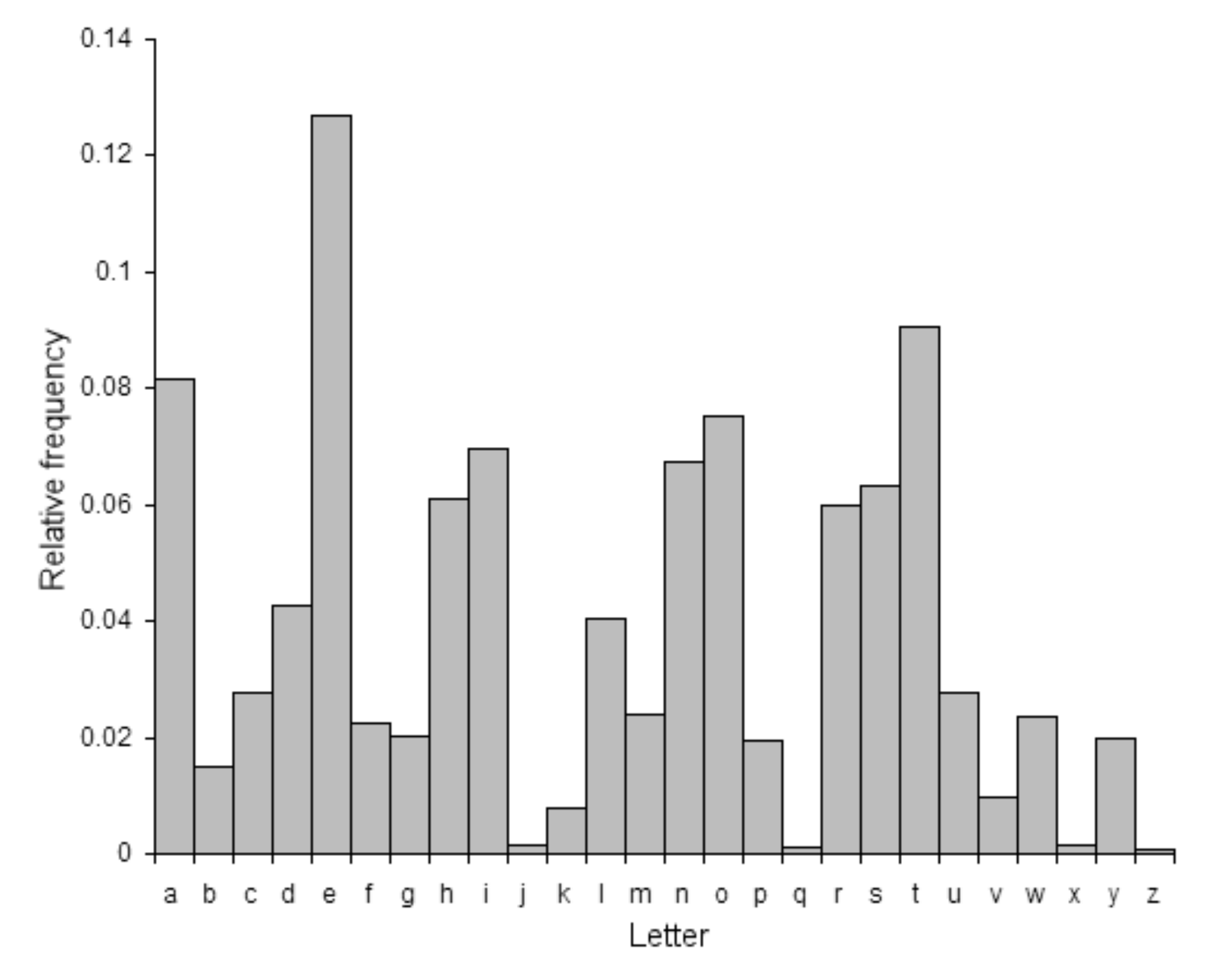}\includegraphics[bb=0bp 0bp 380bp 272bp,width=8cm,height=7cm]{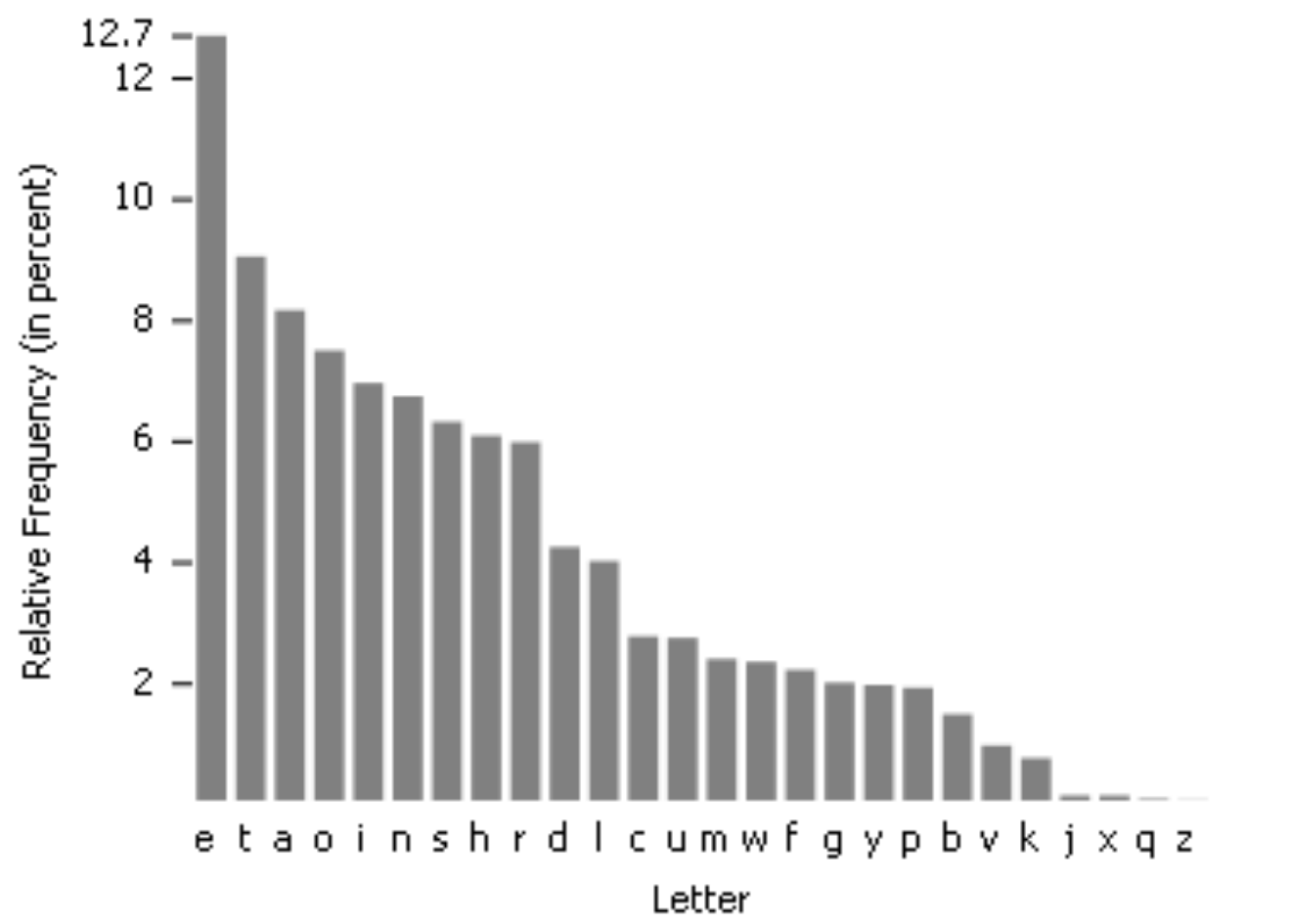}

\begin{center}
\mbox{Figure 2: Bar-chart of frequency order of 26-letter English alphabet}
\end{center}
\[
\]
According to the table below, the letters, $e,\: t,\: a,\: o,\: i,\: n,\: s,\: h,$
and $r$ has comparatively high frequency than the characters $p,\: b,\: v,\: k,\: j,\: x,\: q,$
and $z.$Therefore if we can use a shoter code for the high frequence
letters and comparatively longer code for low frequency characters
we can possible reduce the string lenth for the coding sequence. This
is the whole goal in the rest of the paper. 

\begin{longtable}{|>{\centering}p{2.5cm}|>{\centering}p{2.5cm}|>{\centering}p{2.5cm}|>{\centering}m{2.5cm}|}
\hline 
Letter & Frequency(\%) & Letter & Frequency(\%)\tabularnewline
\hline
\endhead
a & 8.167 & n & 6.749\tabularnewline
\hline 
b & 1.492 & o & 7.507\tabularnewline
\hline 
c & 2.782 & p & 1.929\tabularnewline
\hline 
d & 4.253 & q & 0.095\tabularnewline
\hline 
e & 12.702 & r & 5.987\tabularnewline
\hline 
f & 2.228 & s & 6.327\tabularnewline
\hline 
g & 2.015 & t & 9.056\tabularnewline
\hline 
h & 6.094 & u & 2.758\tabularnewline
\hline 
i & 6.966 & v & 0.978\tabularnewline
\hline
j & 0.153 & w & 2.360\tabularnewline
\hline 
k & 0.772 & x & 0.150\tabularnewline
\hline 
l & 4.025 & y & 1.974\tabularnewline
\hline 
m & 2.406 & z & 0.074\tabularnewline
\hline 
\end{longtable}
\begin{center}
\mbox{Table 1: Frequency order of 26-letter English alphabet}
\end{center}
\[
\]

Above table force us to reorder the alphabet so that we use a short
code for high frequency letter, while longer code for low frequncy
letters. Before doing this we have to notice few things that are characteristics
to the English alphabet. One major thing is alphabet has only 26 characters.
Thus if we use a ternary coding with three trits, we can code all
the 26 characters, but if we use binary coding we have to use 5 bits
to handle these characters. This is the major observation in ternary
system compared to binary system.

\section*{3 Exchanged Character Map - 3ECM}

To represent high frequency characters with short code, we simply exchanged
the positions of the English alphabet. Letters $e,\: t,\: a,\: o,\:...$
get shoter codes while the letters $v,\: k,\: x,\: q,$ and $z$ get
comparatively longer codes as in the Huffman-encoding. 
What remains is to consider which upper case letters, typically the first letter of a word, appear
more frequently as the first letter of a word. The top ten letters with frequencies, which occur at the beginning of words are:

\begin{center}
$\vphantom{}$
\par\end{center}

\begin{center}
\begin{tabular}{|c||c|c|cc|c|c|c|c|c|c|}
\hline 
Letter & T & A & I & S & O & C & M & F & P & W\tabularnewline
\hline 
Frequency(\%) & 15.94 & 15.5 & 8.23 & 7.75 & 7.12 & 5.97 & 4.26 & 4.08 & 4.0 & 3.82\tabularnewline
\hline
\end{tabular}
\par\end{center}

\begin{center}
\[
\mbox{Table 2: Frequency of the first letters}
\]
\end{center}


Clearly, the order differs from that for lower case (cf. t \& e Vs T \& E). Thus we propose the following character map with few extra sysmbols and accompanied ternary codes:

\begin{center}
$\vphantom{}$
\par\end{center}

\begin{center}
\begin{tabular}{|c|c|c|c|c|c|c|c|c|}
\hline 
Dec & Symbol & Ternary & Dec & Symbol & Ternary & Dec & Symbol & Ternary\tabularnewline
\hline
\hline 
0 & W & 0000 & 27 & z & 1000 & 54 & ! & 2000\tabularnewline
\hline 
1 & N & 0001 & 28 & p & 1001 & 55 & \$ & 2001\tabularnewline
\hline 
2 & B & 0002 & 29 & b & 1002 & 56 & \textasciicircum{} & 2002\tabularnewline
\hline 
3 & C & 0010 & 30 & w & 1010 & 57 & \% & 2010\tabularnewline
\hline 
4 & D & 0011 & 31 & x & 1011 & 58 & $\sqrt{}$ & 2011\tabularnewline
\hline 
5 & T & 0012 & 32 & e & 1012 & 59 & , & 2012\tabularnewline
\hline 
6 & F & 0020 & 33 & f & 1020 & 60 & {*} & 2020\tabularnewline
\hline 
7 & G & 0021 & 34 & g & 1021 & 61 & / & 2021\tabularnewline
\hline 
8 & H & 0022 & 35 & v & 1022 & 62 & = & 2022\tabularnewline
\hline 
9 & P & 0100 & 36 & q & 1100 & 63 & < & 2100\tabularnewline
\hline 
10 & J & 0101 & 37 & j & 1101 & 64 & > & 2101\tabularnewline
\hline 
11 & K & 0102 & 38 & k & 1102 & 65 & @ & 2102\tabularnewline
\hline 
12 & L & 0110 & 39 & y & 1110 & 66 & \& & 2110\tabularnewline
\hline 
13 & M & 0111 & 40 & m & 1111 & 67 & ' & 2111\tabularnewline
\hline 
14 & A & 0112 & 41 & n & 1112 & 68 & {}`` & 2112\tabularnewline
\hline 
15 & O & 0120 & 42 & o & 1120 & 69 & ? & 2120\tabularnewline
\hline 
16 & I & 0121 & 43 & a & 1121 & 70 & ( & 2121\tabularnewline
\hline 
17 & S & 0122 & 44 & i & 1122 & 71 & ) & 2122\tabularnewline
\hline 
18 & R & 0200 & 45 & r & 1200 & 72 & \{ & 2200\tabularnewline
\hline 
19 & Q & 0201 & 46 & s & 1201 & 73 & \} & 2201\tabularnewline
\hline 
20 & T & 0202 & 47 & t & 1202 & 74 & {[} & 2202\tabularnewline
\hline 
21 & U & 0210 & 48 & u & 1210 & 75 & ] & 2210\tabularnewline
\hline 
22 & V & 0211 & 49 & h & 1211 & 76 & \textbackslash{} & 2211\tabularnewline
\hline 
23 & . & 0212 & 50 & Space & 1212 & 77 & ; & 2212\tabularnewline
\hline 
24 & X & 0220 & 51 & d & 1220 & 78 & : & 2220\tabularnewline
\hline 
25 & Y & 0221 & 52 & l & 1221 & 79 & + & 2221\tabularnewline
\hline 
26 & Z & 0222 & 53 & c & 1222 & 80 & - & 2222\tabularnewline
\hline
\end{tabular}
\par\end{center}

\begin{center}
\[
\mbox{Table 3: Coding Table}
\]

\par\end{center}

In our table the ternary code of almost all most frequent letters contain at least 
one $12$ pair. Applying the $B_{23}$ scheme results in the following code table.

According to the table, what we achieve here is that high frequent characters
has shorter lenth compared to the others. Let us exemplify the method. For that we use a test string and code it in two ways, using $B_{23}$ and the standed $ASCII$ code and then compare the two bitstrings generated from these techniques. 
It is done in the following way. The first Algorithm converts the text directly into $B_{23}-code$,
and Algorithms 2 and 3 convert it back into human readable characters.

\begin{center}
$\vphantom{}$

\begin{tabular}{|c|c|c|c|c|c|c|c|c|} \hline  Symbol  & Ternary  & $B_{23}$  & Symbol  & Ternary  & $B_{23}$  & Symbol  & Ternary  & $B_{23}$\tabularnewline 

\hline \hline 
W  & 0000  & 00000000 & z  & 1000  & 01000000  & !  & 2000  & 10000000\tabularnewline \hline  N & 0001  & 00000001  & p  & 1001  & 01000001 & \$  & 2001  & 10000001\tabularnewline \hline  B & 0002 & 00000002  & b  & 1002  & 01000010 & \textasciicircum{}  & 2002 & 10000010\tabularnewline \hline  C  & 0010  & 00000100  & w  & 1010  & 01000100  & \%  & 2010  & 10000100\tabularnewline \hline  D  & 0011  & 00000101  & x  & 1011  & 01000101  & $\sqrt{}$  & 2011 & 10000101\tabularnewline \hline  T  & \bf{0012}  & 000011  & e  & \bf{1012} & 010011 & ,  & \bf{2012} & 100011\tabularnewline \hline  F  & 0020  & 00001000 & f  & 1020 & 01001000 & {*} & 2020  & 10001000\tabularnewline \hline  G  & 0021  & 00001001  & g  & 1021 & 01001001  & /  & 2021 & 10001001\tabularnewline \hline  H  & 0022  & 00001010  & v & 1022  & 01001010  & =  & 2022  & 10001010\tabularnewline \hline  P  & 0100  & 00010000  & q  & 1100 & 01010000  & <  & 2100 & 10010000\tabularnewline \hline  J  & 0101 & 00010001  & j  & 1101  & 01010001  & >  & 2101 & 10010001\tabularnewline \hline  K  & 0102 & 00010010 & k  & 1102 & 01010010  & @ & 2102 & 10010010\tabularnewline \hline  L  & 0110 & 00010100 & y  & 1110  & 01010100 & \&  & 2110 & 10010100\tabularnewline \hline  M  & 0111 & 00010101 & m  & 1111  & 01010101 & '  & 2111 & 10010101\tabularnewline \hline  A  & \bf{0112} & 000111 & n  & \bf{1112}  & 010111  & {}``  & \bf{2112}  & 100111\tabularnewline \hline  O  & \bf{0120} & 001100 & o  & \bf{1120}  & 011100  & ?  & \bf{2120} & 101100\tabularnewline \hline  I  & \bf{0121} & 001101 & a  & \bf{1121}  & 011101  & (  & \bf{2121}  & 101101\tabularnewline \hline  S  & \bf{0122} & 001110 & i  & \bf{1122}  & 011110  & )  & \bf{2122}  & 101110\tabularnewline \hline  R  & 0200 & 00100000 & r  & \bf{1200} & 110000  & \{  & 2200  & 10100000\tabularnewline \hline  Q  & 0201 & 00100001 & s  & \bf{1201} & 110001  & \}  & 2201  & 10100001\tabularnewline \hline  T  & 0202 & 00100010 & t  & \bf{1202} & 110010 & {[}  & 2202  & 10100010\tabularnewline \hline  U  & 0210 & 00100100 & u  & \bf{1210}  & 110100  & ]  & 2210 & 10100100\tabularnewline \hline  V  & 0211 & 00100101 & h  & \bf{1211}  & 110101  & \textbackslash{}  & 2211  & 10100101\tabularnewline \hline  .  & \bf{0212} & 001011 & Space  & \bf{1212}  & 1111  & ; & \bf{2212}  & 101011\tabularnewline \hline  X  & 0220 & 00101000 & d  & \bf{1220}  & 111000  & :  & 2220 & 10101000\tabularnewline \hline  Y  & 0221 & 00101001 & l  & \bf{1221}  & 111001  & +  & 2221  & 10101001\tabularnewline \hline  Z  & 0222  & 00101010 & c  & \bf{1222}  & 111010 & -  & 2222  & 10101010\tabularnewline 
\hline 

\end{tabular}

\par\end{center}

\begin{center}
\[
\mbox{Table 4: Complete Coding Table}
\]

\par\end{center}


\begin{algorithm}
Coding into $B_{23}$ form. Here we assume that the text string only
consists of the characters listed in the above table. 

Let $u_{ij}$:= \{(W, 0000), (N, 0001), (B, 0002), ... ,(-, 2222)\},
i = 1, 2, ..., 81; j = 1,2

Input: Text String S. Let l=length(S); NewString ={[}],

for i=0 to l do

if $u_{i1}=characterAt(S,i)$

NewString = NewString +$u_{i2}$,

end do.
\end{algorithm}
Once we received the string to the destination we use the following
two algorithm to decode it. First one to transform it back into ternary
and then the second one to decode it to human readable code. 

\begin{algorithm}
Converting into ternary.

Input received string S. Let l=length(S/2), NewString = {[}].

For i=1 to l

if SubString(S,i,i+1) = '00' then NewString = NewString + '0',

elseif SubString(S,i,i+1) = '01' then NewString = NewString + '1'

elseif SubString(S,i,i+1) = '10' then NewString = NewString + '2'

else NewString = NewString + '12'

i = i+2;

end do
\end{algorithm}
\begin{flushleft}
Now we read this string as a four character groups and assign each
such group into a single character according to the above table. So
we use the following algoritm
\par\end{flushleft}

\begin{algorithm}
Decoding ternary into human readable characters

Let $u_{ij}$:= \{(W, 0000), (N, 0001), (B, 0002), ... ,(-, 2222)\},
i = 1, 2, ..., 81; j = 1,2

Input: Ternary String S. Let l=length(S); NewString ={[}],

for i=0 to l do

if $u_{i2}=subString(S,i,i+3)$

NewString = NewString +$u_{i1}$,

else NewString = NewString+{[}],

i=i+4;

end do.
\end{algorithm}

\subsection*{Upperbound for Compression Ratio}

  To derive an upperbound for the compression ratio, we have to make few assumptions since this compression is dynamic. Let us assume for a longer text at least the \emph{space}  has the highest frequency. In order to get a numerical value we assume the chance of \emph{space} is 50\% and the other letters \emph{$a_i$} has the probabilities, \emph{$p_i$} half of the table values. Then 

\begin{align}
Compression\: Ratio &= \frac{\mbox{Length of compressed text}}{\mbox{Length of uncompressed text}}, \notag \\
                  &= \frac{\sum_{i=1}^{n}\mbox{length of letter}\: \emph{$a_i$}\: \mbox{after compression} \times p_i}                                    {\sum_{i=1}^{n}\mbox{length of letter}\: \emph{$a_i$} \:\mbox{before compression} \times p_i},\notag \\
                  &\leq \frac{4 \times 100 + 6*\left (8.167+ ... +2.758 \right)+8*(1.492+ ... +0.074)}{8 \times 200}, \notag \\
                  &= 0.64577 \notag
\end{align}  

We can achieve much stronger compressions when we applied this scheme with the Huffman encoding. We can directly use the $B_{23}$ scheme, once the technology develops to a level where we can use ternary bitstrings for data transmission.    


\begin{flushleft}
Now we use the above algorithms to the following example, and cpmpare
the results with the familier ASCII encoder. We notice that even for
a short code we see noticable reduction of the coded string. 
\par\end{flushleft}

\begin{example*}
So, consider the text string,
\end{example*}
\begin{center}
S = {}`` This is the test message.''
\par\end{center}

\begin{flushleft}
Once we apply the code into $B_{23},$we get,
\par\end{flushleft}

\begin{align*}
CodedString\, = \;&0000111101010111101100011111011110110001111111001011010101001111111100100\\
 &1001111000111001011110101010101001111000111000101110101001001010011001011
\end{align*}

$ $

\begin{flushleft}
This can be compared with the corresponding Binary string generated
by ASCII coding, which is 25\% larger than the $B_{23,}$
\par\end{flushleft}

\begin{align*}
BinaryString = \;&1010100011010000110100101110011001000000110100101110011001000000111010001\\&1010000110010100100000011101000110010101110011011101000010000001101101011\\
&00101011100110111001101100001011001110110010100101110
\end{align*}

\begin{flushleft}
$ $
\par\end{flushleft}

\begin{flushleft}
After using second algorithm, we end up getting,
\par\end{flushleft}

\begin{center}
DecodedString = \char`\"{}This is the test message.\char`\"{}
\par\end{center}

$ $

Thus if we can adopt this technique in word processing and data compression we can drastically reduce the memory needed to store information and also in data transmission.

We can extend this technique for six-digits ternary system with more characters than in this case. That would be the nest task. We can also extend this technique with slight modifications for compressing highly randomized data \cite{key-16}. We conclude the paper with the following question,\\

\paragraph*{Question:}

What kind of distributions has more $12-$pairs in a string of $0,1$,
and 2?

\begin{acknowledgement*}
Author would like to express his heart felt gratitude to Dr. Jerzy
Kocik of Department of Mathematics of Southern Illinois University,
for his invaluable suggestions and continued support. 
\end{acknowledgement*}


\begin{thebibliography}{1}

\bibitem{key-3}Gundersen, Henning B., Yngvar, Fast Addition Using Balanced Ternary Counters Designed with CMOS Semi-Floating Gate Devices, University of Oslo, Norway, May 2007, ISSN: 0195-623X

\bibitem{key-11}Van et al, US Patent 4003041. System for converting binary signals into shorter balanced ternary code signals. filed Dec 30, 1974 

\bibitem{key-13}Van et al., US Patent 3599205. Binary to ternary protected code converter. filed Aug 28, 1968

\bibitem{key-14}Gilbert et al., US Patent 3866147. Balanced correlated ternary coding system. filed Feb 26, 1973

\bibitem{key-7}Kaplan et al., US Patent 5325091. Text-compression technique using frequency-ordered array of word-number mappers. June, 1994. 341/51 

\bibitem{key-18}Lisle et al., US Patent 4843389. Text compression and expansion method and apparatus. June, 1989

\bibitem{key-19}Crandall, G.E. US Patent 5999949. Text file compression system utilizing word terminators. December, 1999 

\bibitem{key-8}Jacoby et al., US Patent 4506252. Ternary data encoding system. filed July 05, 1983.

\bibitem{key-9}Fensch, T., US Patent 4628472. Binary multiplier using ternary code. filed Nov 18, 1983  

\bibitem{key-10}Hsieh, C.,  US Patent 5648774. Variable length coding with three-field codes. filed May 08, 1995 

\bibitem{key-22}Data Compression, 4ed. David Salomon, Springer-Verlag, London, 2007.

\bibitem{key-24}Gray, F. US Patent 2632056. Pulse Code Communication. March 17, 1953.

\bibitem{key-23}Cole, A.J.(1985) A Note On Peano Polygons and Gray Codes, International Journal of Computer Mathematics, 18:3-13.

\bibitem{key-20}Potter et al., US Patent 3226685. Digital recording systems utilizing ternary, \emph{n} bit binary and other self-clocking forms. December, 1965. 360/40 

\bibitem{key-21}Howson, R.D. US Patent RE30182. Precoded ternary data transmission. December, 1979. 375/290 

\bibitem{key-2}David Salomon, Variable-length Codes for Data Compression, Springer-Verlag, London, 2007.

\bibitem{key-4}David Salomon, A Concise Introduction to Data Compression, Springer-Verlag, London, 2008.

\bibitem{key-25}C.B, John, G. C, Ian, H.W., Text Compression, Timothy, Prentice Hall, Englewood Cliffs, New Jersey, 1990

\bibitem{key-1}Robert Edward Lewand, Cryptological Mathematics, The Mathematical Association of America, Washington DC, 2000.

\bibitem{key-16}James, D.C. US Patent 5533051. Method for data compression. July 1996, 375/240 

\end{thebibliography}
\end{document}